\documentclass[conference]{IEEEtran}
\IEEEoverridecommandlockouts
\usepackage{cite}
\usepackage{amsmath,amssymb,amsfonts}
\usepackage{algorithmic}
\usepackage{graphicx}
\usepackage{textcomp}
\usepackage{xcolor}
\usepackage{breqn}
\usepackage{pgfplots}
\usepackage{pgfplotstable}
\usepackage{colortbl}

\newtheorem{definition}{Definition}
\newtheorem{proof}{Proof}
\newtheorem{proposition}{Proposition}
\usepackage{subcaption}

\newcounter{eqgroup}
\setcounter{eqgroup}{1} 

\newcounter{subeq}[eqgroup]
\renewcommand{\thesubeq}{C\arabic{eqgroup}.\arabic{subeq}}

\newcommand{\tagthis}{\refstepcounter{subeq}\tag{\thesubeq}}

\newcommand{\md}{\textcolor{black}}

\newcommand{\findOverallMax}[2]{%
    \pgfplotstableread[col sep=comma]{#1}\datatable

    \pgfmathsetmacro{\maxvalueA}{-1e6}
    \pgfmathsetmacro{\maxvalueB}{-1e6}
    \pgfmathsetmacro{\maxvalueC}{-1e6}
    \pgfmathsetmacro{\maxvalueD}{-1e6}

    \pgfplotstableforeachcolumnelement{Us0}\of\datatable\as\value{%
        \pgfmathsetmacro{\temp}{\value}%
        \pgfmathparse{ifthenelse(\temp>\maxvalueA,\temp,\maxvalueA)}%
        \global\let\maxvalueA\pgfmathresult%
    }

    \pgfplotstableforeachcolumnelement{Us1}\of\datatable\as\value{%
        \pgfmathsetmacro{\temp}{\value}%
        \pgfmathparse{ifthenelse(\temp>\maxvalueB,\temp,\maxvalueB)}%
        \global\let\maxvalueB\pgfmathresult%
    }

    \pgfplotstableforeachcolumnelement{Us2}\of\datatable\as\value{%
        \pgfmathsetmacro{\temp}{\value}%
        \pgfmathparse{ifthenelse(\temp>\maxvalueC,\temp,\maxvalueC)}%
        \global\let\maxvalueC\pgfmathresult%
    }

    \pgfplotstableforeachcolumnelement{Us3}\of\datatable\as\value{%
        \pgfmathsetmacro{\temp}{\value}%
        \pgfmathparse{ifthenelse(\temp>\maxvalueD,\temp,\maxvalueD)}%
        \global\let\maxvalueD\pgfmathresult%
    }

    \pgfmathsetmacro{\overallMax}{max(\maxvalueA,\maxvalueB,\maxvalueC,\maxvalueD)}

    \def#2{\overallMax}
}


\pgfmathsetmacro{\finalMaxA}{0.3}
\pgfmathsetmacro{\finalMaxB}{0.6}

\def\BibTeX{{\rm B\kern-.05em{\sc i\kern-.025em b}\kern-.08em
    T\kern-.1667em\lower.7ex\hbox{E}\kern-.125emX}}

\begin{document}

\title{Toward Energy and Location-Aware Resource Allocation in Next Generation Networks
\thanks{This work was supported by the ANR under the France 2030 program, grants "NF-NAI: ANR-22-PEFT-0003" and "NF-JEN: ANR-22-PEFT-0008"}
}

\author{Mandar Datar and Mattia Merluzzi,~\IEEEmembership{Member,~IEEE}\\
CEA-Leti, Univ. Grenoble Alpes, F-38000 Grenoble, France

}

\maketitle

\begin{abstract}
Wireless networks are evolving from radio resource providers to complex systems that also involve computing, with the latter being distributed across edge and cloud facilities. Also, their optimization is shifting more and more from a performance to a value-oriented paradigm. The two aspects shall be balanced continuously, to maximize the utilities of Services Providers (SPs), users quality of experience and fairness, while meeting global constraints in terms of energy consumption and carbon footprint among others, with all these heterogeneous resources contributing. In this paper, we tackle the problem of communication and compute resource allocation under energy constraints, with multiple SPs competing to get their preferred resource bundle by spending a a fictitious currency budget. By modeling the network as a Fisher market, we propose a low complexity solution able to achieve high utilities and guarantee energy constraints, while also promoting fairness among SPs, as compared to a social optimal solution. The market equilibrium is proved mathematically, and numerical results show the multi-dimensional trade-off between utility and energy at different locations, with communication and computation-intensive services. 

\end{abstract}

\section{Introduction}
Next generation networks will be characterized by heterogeneity from different perspectives. First, unlike previous generations, bandwidth (or, in general, wireless resources) will not be the only resource to be managed and orchestrated. In fact, computing, storage, memory and Artificial Intelligence (AI) resources will be distributed across networks, thus becoming an integrating part, as important as radio, and to be orchestrated jointly with it \cite{Dilo23}. Another heterogeneity is related to energy resources, with renewable sources distributed and possibly owned by different actors (public or private), to be also jointly considered in system optimization to minimize the negative impact of the network from an environmental perspective, but also to reduce economic costs for the operators. The energy mix is time-varying and location-dependent. For example, a distant cloud might offer low-carbon service if located in cold regions or regions with favorable energy-mix. Therefore, instantiating services requires informed decisions including the amount of resources to dedicate, their location, and ultimately the utility (or, value) they generate for the involved stakeholders (users, service providers, operators, etc.). This holistic perspective of network resources (communication and computation), energy resources, and utilities (values) will be at the focus of 6G and, in general, future information networks. However, the complexity of this management and orchestration problem should not jeopardize the benefits of this holistic approach. We need sophisticated yet simple solutions to address the problem, for instance hinging on distributed solutions that require low signaling overhead among the involved parties. This paper focuses on this complex management of heterogeneous resources in heterogeneous environments, with a Fisher Market (FM) perspective \cite{roughgarden2010algorithmic}.

\noindent \textbf{Related works.} The problem of energy in networks has been already well investigated in the literature, historically starting from energy-efficiency metrics. These metrics might be misleading due to potential rebound effects. The latter comes from the fact that higher efficiency naturally results in higher service requests, thus in increased traffic and absolute energy consumption. The focus shall shift towards absolute energy consumption-aware networking and, beyond that, carbon-aware management based on energy mix.
The authors of \cite{kamran2024green} outline ongoing standardization efforts, identify gaps in current mobile networks, and present a preliminary architectural perspective for energy-efficient, user-aware 6G networks. The work in \cite{zhong2024toward} explores the vision of carbon-neutral 6G networks, outlining key enablers such as AI-driven optimization, energy harvesting, and green network architecture design to reduce carbon emissions across the communication ecosystem. The authors of \cite{bolla20236g} propose the concept of Decarbonization Service Agreements (DSAs) to incentivize sustainable behaviors among stakeholders by integrating energy and carbon-related performance indexes into SLAs, aiming to foster a holistic green economy across 6G ecosystems. The authors in \cite{boiardi2012joint} propose a joint optimization framework for energy-aware wireless mesh networks, minimizing both capital and operational expenditures. Their results show significant energy savings over traditional two-step planning approaches, especially when allowing flexible coverage constraints. \cite{zilberman2023toward}  emphasizes the importance of carbon-aware and intelligent networking across all layers, urging for cross-disciplinary efforts to standardize carbon metrics and enable green routing decisions. 

\md{Markets, particularly FM have recently gained attention as effective tools for resource allocation that balance individual interests while satisfying key fairness and efficiency principles such as Envy-Freeness (EF), Pareto Optimality (PO), and the Sharing Incentive (SI)}\cite{buyya2008market,nguyen2019market,modina2022multi,datar2023fisher,moro2021joint}.  In our recent work \cite{YourLastName2025}, we extended the classical FM model to incorporate externalities in resource allocation, such as the environmental impact of energy consumption, by integrating the concept of Pigouvian pricing \cite{baumol1972taxation}. We adopt this extended model in our approach to support energy-aware and efficient resource allocation.

\noindent\textbf{Our contribution.} 
In this work, we consider a complex network environment where multiple Service Providers (SPs) deliver heterogeneous services, each with diverse and often conflicting resource requirements using the wireless and edge-to-cloud continuum. This includes shared access to critical resources such as radio spectrum, computing power (CPU), and memory (RAM). The network spans various geographical areas, including urban, residential, and industrial zones, each subject to different constraints on electricity consumption aimed at reducing carbon emissions. Ensuring fair and efficient resource allocation under such diverse conditions poses significant challenges. To address this, we propose a novel market equilibrium-based mechanism for joint resource allocation and pricing that balances performance, fairness, and energy towards environmental and economic sustainability.

The remainder of this paper is structured as follows: Section \ref{sys_model} presents the system model. In Sections \ref{resource_allocation_problem}, we discuss the resource allocation and pricing problem and provide the market equlibrium based solution. In Section \ref{Numerical}, we present numerical results, while Section \ref{sec:conclusions} concludes the work.

\section{System Model}\label{sys_model}
We consider a communication marketplace where a set of service providers, denoted by $\mathcal{S} = \{1, \dots, S\}$,  acquire different types of resources from an infrastructure provider (InP) to offer services to end users. These resources, indexed by the set $\mathcal{R} = \{1, \dots, R\}$, include radio spectrum, CPU, and RAM. The InP operates a network spanning multiple geographical regions, such as urban, residential, and industrial zones, collectively represented by a set $\mathcal{L}=\{1, \dots, 
L\}$, as illustrated in Fig. \ref{fig:enter-label}. 
To support services, each location $l \in \mathcal{L}$ is equipped with essential infrastructure, including base stations or cells for wireless connectivity, mobile edge computing (EC) facilities for localized computation, and cloud computing resources shared across multiple locations to handle computationally intensive tasks. Let $\mathcal{C}$ denote the set of cloud computing facilities that can be used.  
\subsection{Service utilities and costs}\label{service_utilities}

Let \( x^s_{rlc} \) represent the amount of resource \( r \) requested by service provider \( s \) at location \( l \) from cloud facility \( c \). For ease of notation, we define \( c=0 \) to represent resources obtained from the Edge Computing (EC) facility at the same location \( l \). Let \( d^s_{rlc} \),  represent the base resource demand, \emph{i.e.,} minimum amounts of resource $r$ required by SP \( s \) to achieve a unit service rate (utility) from the cloud facility \( c \) at location \( l \). Given an allocation of \( \mathbf{x}^s_{lc}=( {x}^s_{1lc}, \dots, {x}^s_{rlc} ) \), the utility of the service provider is expressed as:  
\begin{align}
    u^s_{lc}(\mathbf{x}^s_{lc})= \min_{r} \left\{ \frac{x^s_{rlc}}{d^s_{rlc}} \right\}\quad\forall l\in \mathcal{L},\forall c\in \mathcal{C}\; \label{agent_uti1}
\end{align}  
The utility function in \eqref{agent_uti1} captures a perfect complement relationship between resources. As an example, suppose a service has a base demand of $2$ units of bandwidth (e.g., resource blocks) and $1$ unit of CPU (e.g., 1 core). If it is allocated $4$ units of bandwidth and $2$ units of CPU, its service rate is:
\(
u^s_{lc} = \min\left\{ \frac{4}{2}, \frac{2}{1} \right\} = 2
\). Now, if the bandwidth is increased to $6$ units (while keeping CPU at $2$ units), the utility remains the same. Thus, utility improves only when all allocated resources increase proportionally.

\begin{figure}[t]
    \centering
    \includegraphics[width=0.75\linewidth]{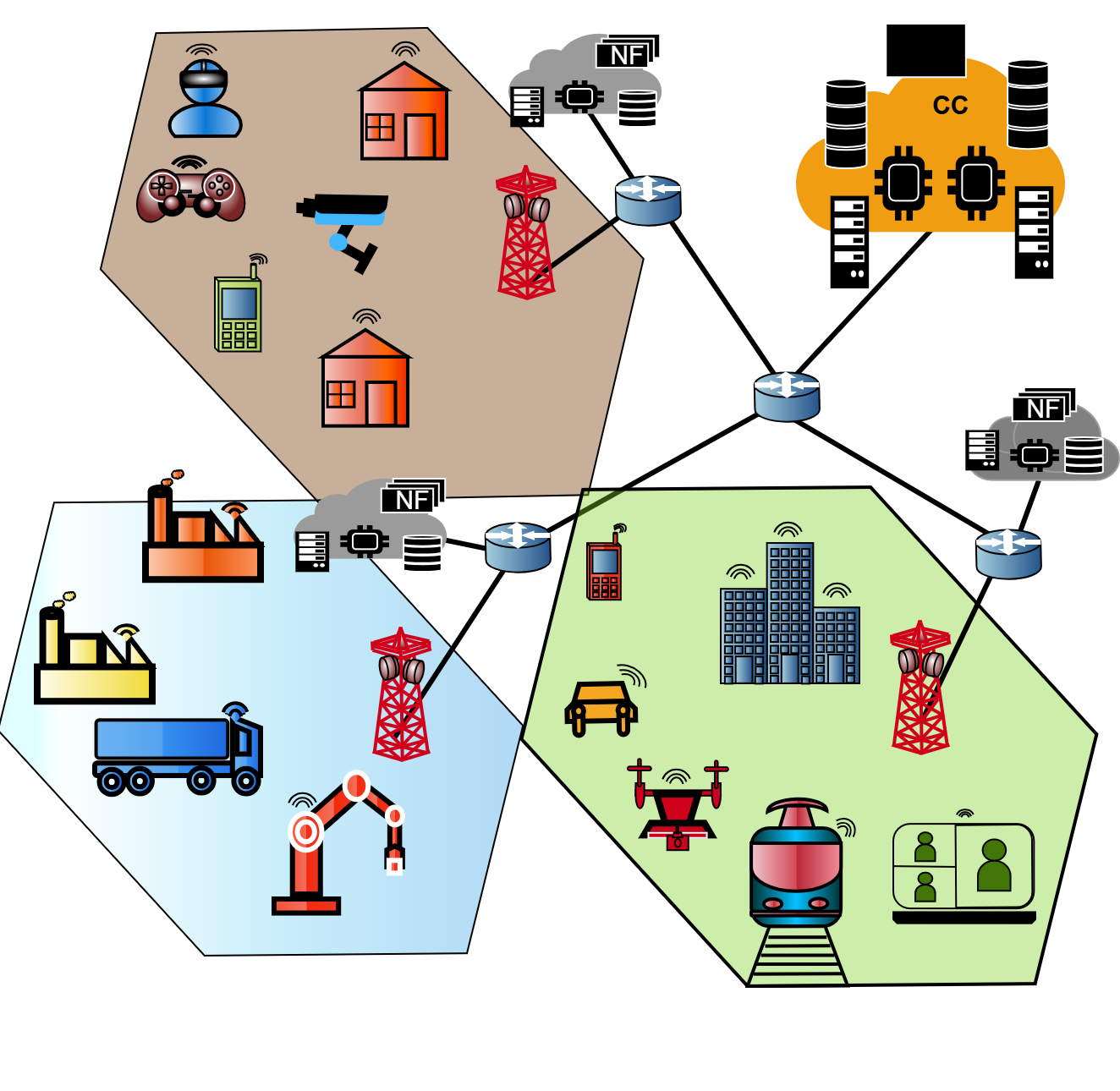}
    \captionsetup{skip=0pt}
    \caption{Network spanning diverse
geographical regions, such as urban, residential, and industrial
zones}
    \label{fig:enter-label}
\end{figure}
The base resource demand \( ( {d}^s_{1lc}, \dots, {d}^s_{rlc} ) \) varies based on the proximity of the cloud or EC facility to the service area. For example, achieving a unit service rate from a distant cloud requires more resources than from EC, depending on its actual distance from the location. The total utility obtained by service provider \( s \) to support its service  at location \( l \) from multiple cloud and EC facilities is given by:  

\begin{align}
  U^s_{l}(\mathbf{x}^s_{l}) = \sum\nolimits_{c} u^s_{lc}(\mathbf{x}^s_{lc}).\label{agent_uti2}
\end{align}  
where $\mathbf{x}^s_{l}=\left(\mathbf{x}^s_{lc}\right)_{c=0}^{c=C}$ denote the bundle of resources allocated to SP $s$ at location $l$, from different cloud facilities including the EC server located at $l$ (i.e., for $c = 0$). The overall utility achieved by service provider \( s \) across all locations is:  

\begin{align}
   U^s(\mathbf{x}^s) =\sum\nolimits_{l} U^s_{l}(\mathbf{x}^s_{l}) \label{agent_uti3}
\end{align}  
Using network resources contributes, directly or indirectly, to CO\(_2\) emissions. Since most of these emissions are due to electricity use, we focus on the energy consumption associated with resource utilization to account for the carbon footprint of the overall communication and compute networks. 
Let $\widehat{x}_{rl}$ represent the total demand for resource $r$ (including both radio and EC resources) from its local facility at location $l$. Similarly, let
\(
\widetilde{x}_{rc} = \sum_s \sum_l x^s_{rlc}
\)
represent the total demand for resource $r$ at cloud facility $c$, summed over all services $s$ and locations $l$.

We assume that total energy consumption, considering the carbon intensity of electricity generation, is subject to regulatory constraints. These constraints ensure that energy usage does not exceed specified limits set by regulatory authorities. In this work, we impose two constraints on various positions. First, we assume that each location \( l \) has a restriction on the total energy consumption. This energy consumption arises from two primary factors: the utilization of resources in the EC  facility available at that location (denoted by \( c = 0 \) ) and the energy consumed due to the total usage of radio bandwidth at that location. The latter includes the bandwidth used for uploading tasks to different cloud facilities as well as to EC facilities. Let $\widehat{e}_l(\widehat{\mathbf{x}}_{l})$ represent the total energy consumed at location $l$, and $\widetilde{e}_c(\widetilde{\mathbf{x}}_{c})$ represent the total energy consumed at cloud facility $c$. We assume that each location $l$ has a local energy consumption limit denoted by $E_l$.
\begin{equation}
    \widehat{e}_l(\widehat{\mathbf{x}}_{l}) \leq E_l \quad \forall l \in \mathcal{L}
\end{equation}
This limit depends on the characteristics of the area, such as whether it is urban, residential, or industrial. For example, urban areas may have stricter energy limits due to higher population density and concerns about emissions, while industrial areas may allow higher energy consumption to support manufacturing or heavy computing, thus balancing economic sustainability among others.
In addition to these local limits, the network is also subject to a global constraint, which caps the total energy across all locations and cloud facilities:
\begin{equation}
    \sum\nolimits_{l} \widehat{e}_l(\widehat{\mathbf{x}}_{l}) + \sum\nolimits_{c} \widetilde{e}_c(\widetilde{\mathbf{x}}_{c}) \leq E_g
\end{equation}

Each location has limited physical resources, constrained by a capacity $X_{rl}$ for each resource type $r$. In contrast, cloud resources are considered virtually unlimited. The resource usage at any location $l$ must therefore satisfy:
\begin{equation}
    \widehat{x}_{rl} \leq X_{rl} \quad \forall l \in \mathcal{L}
\end{equation}

\section{Resource allocation and pricing problem}\label{resource_allocation_problem}
We consider that each service provider acts selfishly, aiming to maximize its individual utility. Service providers do not necessarily account for the impact of their resource usage on the collective emissions caused by other SPs. This behaviour may lead to violations of the regulatory limits on CO$_2$ emissions. The objective of this work is to design a pricing mechanism for resource usage that penalizes SPs to maintain CO$_2$ emissions below the prescribed thresholds. However, when applying the mechanism, our goal is to harmonize the needs of all service providers and ensure a fair allocation. \md{Toward this goal, we propose a resource allocation scheme based on the Fisher market model, formally defined as 
$$\mathcal{M}:=\!\left \langle \mathcal{S},\left(\mathbf{x}^{s}\in\mathbb{R}^{K}\right)_{s\in\mathcal{S}},\left(U^{s}\right)_{s\in\mathcal{S}},\left(B^{s}\right)_{s\in\mathcal{S}},\mathbf{p}\in\mathbb{R}^K \!\right \rangle, $$ 
where a set of buyers \( \mathcal{S} \) with budgets \( (B^s)_{s \in \mathcal{S}} \) compete to purchase bundles \( \mathbf{x}^s \in \mathbb{R}^K \) of divisible goods, priced by \( \mathbf{p} \in \mathbb{R}^K \), to maximize their individual utilities \( (U^s)_{s \in \mathcal{S}} \), where $K=R\times L\times (C+1)$ in our case.
The market seeks an equilibrium where each buyer spends their budget optimally and the market clears.} In our setting,  each \( s \in \mathcal{S} \) service provider (\emph{i.e.,} buyer) is allocated a budget \( B^s \) representing artificial currency or carbon credits. This budget, which is set by the regulator, may depend on factors such as the service provider's priority or its efforts to reduce carbon emissions, for example, by adopting renewable energy sources. This allocated budget allows SPs to procure resources from different locations based on the services they provide. In this setup, a market operator sets the prices for each resource, and SPs are required to pay these prices (including taxes) to use them. Let \( p_{rlc} \) represent the price per unit for using resource $r$ at location $l$ of cloud facility $c$ and the vector $\mathbf{p}=\left( p_{rlc}\right)_{r\in\mathcal{R},c\in\mathcal{C},l\in \in \mathcal{L}}$ denote the collective prices.

  

\subsection{Service providers problem}\label{sec:SP_problem}
We consider that the SPs are rational and given the announced prices, they allocate their budgets across different resources in different locations to procure the optimal bundle of heterogeneous resources required to support their services. Thus, the decision problem for each SP is defined as follows:
\[
\begin{aligned}
Q_s: \quad & \underset{\mathbf{x}^s}{\text{maximize}} & &  \sum_{l}\sum_{c}u^s_{lc}(\mathbf{x}^s_{lc}) \\
& \text{subject to} & & 
\sum\nolimits_{r,l,c} p_{rlc}  x^s_{rlc} \leq B_s, \\
&&& \sum\nolimits_{c}u^s_{lc}\leq \widehat{U}^s_l \quad \forall l \in \mathcal{L}\\
& & & x^s_{rlc}\geq 0 \;\forall c\in\mathcal{C},l\in\mathcal{L}.
\end{aligned}
\]
with $u^s_{lc}(\mathbf{x}^s_{lc})= \min_{r} \left\{ \frac{x^s_{rlc}}{\overline{d}^s_{rlc}}\right\}, \forall c\in\mathcal{C},l\in\mathcal{L}$.
In the above program, note that the SPs’ decisions (\emph{i.e, resource demand}) depend on the prices set for the resources. However, given any prices, there is no guarantee that the total demand for each resource will remain within the resource capacity or that the energy consumption due to its usage will remain within the allowed limits. To address this, the market operator adjusts the prices to ensure that resource usage and energy consumption stay within their respective limits. In economic theory, such a balanced state is known as a market equilibrium (ME) or competitive equilibrium (CE) \cite{arrow1954existence}.
\subsection{Market Equilibrium }
\begin{definition}
A market equilibrium (ME) for the market $\mathcal{M}$ is defined as a pair $(\mathbf{p}^*,\mathbf{x}^{*})$ of prices and allocation, where the market meets its total energy restrictions and service providers get their preferred resource bundle. Mathematically $(\mathbf{p}^*,\mathbf{x}^{*})$ is a CE if the following two conditions are satisfied.\\
\textbf{C1} Given the price vector, every SP $s$ spends its budget such that it receives resource bundle ${\mathbf{x}^{s}}^{*}$ that maximizes its utility. 
\begin{equation}
\scalebox{0.93}{$
\displaystyle
{\mathbf{x}^{s}}^{*} \in \arg\max \left \{ U^{s}(\mathbf{x}^s) \,\middle|\, \sum_{r,l,c} p^*_{rlc}  x^s_{rlc} \leq B_s \right \}
\quad \forall s \in \mathcal{S}
$}
\tag{C1.1} \label{bestdemand}
\end{equation}
\stepcounter{eqgroup}
\textbf{C2} If the total energy consumption due to resources usage meets the capacity, it is positively priced; otherwise, the corresponding resource  has zero price, i.e., we have:
\begin{align}
    &\scalebox{0.96}{ $p^*_{rlc}=\lambda\nabla_{ \widetilde{x}_{rc}}\widetilde{e}_{c}(\widetilde{x}^*_{c})\quad \forall r\in \mathcal{R},\; \forall l\in \mathcal{L},\;\forall c\in \mathcal{C} \setminus \{0\},$}\tagthis \label{c2.1} \\
    &\scalebox{0.96}{$ p^*_{rl0}=\gamma_{rl}+(\lambda+\mu_{l})\nabla_{ \widehat{x}_{rc}}\widehat{e}_{\ell}(\widehat{x}^*_{l})\quad \;\forall r\in \mathcal{R},\; \forall l\in \mathcal{L},$}\tagthis \label{c2.2}\\
     &\scalebox{0.96}{ $\gamma_{rl}(\widehat{x}^*_{rl}- X_{rl})=0\quad\forall r\in \mathcal{R},\; \forall l\in \mathcal{L},$}\tagthis \label{c2.3}\\
    &\scalebox{0.96}{$\mu_{l}\left(\widehat{e}^l(\mathbf{x}^*)-E_{\ell}\right)=0\quad \forall l\in \mathcal{L}$} \tagthis \label{c2.4}\\
&\scalebox{0.96}{$\lambda\left(\sum_{l}\widehat{e}_l(\widehat{\mathbf{x}}^*_{l})+\sum_{c}\widetilde{e}_c(\widetilde{\mathbf{x}}^*_{c})- E_{g} \right)=0 $}\tagthis \label{c2.5}
\end{align}
In the above conditions, $\gamma_{r\ell c}$ represents the actual price per unit of resource $r$, $\mu_\ell$ is an additional tax applied at location $\ell$ due to local energy consumption limits, and $\lambda$ is a global tax imposed to keep overall resource usage below a threshold.

\end{definition}
In the next section, we demonstrate that the market equilibrium solution to the formulated market can be computed by solving the convex optimization program.   
\subsection{Solution}\label{solution}

\begin{proposition}
 Consider a market where each agent’s utility is defined as in \eqref{agent_uti1}–\eqref{agent_uti3}. Suppose that the functions $\widehat{e}_l(\widehat{\mathbf{x}}_{l})$ and $\widetilde{e}_c(\widetilde{\mathbf{x}}_{c})$ are convex and increasing in $\widehat{\mathbf{x}}_{l}$ and $\widetilde{\mathbf{x}}_{c}$, respectively, for all $l \in \mathcal{L}$ and $c \in \mathcal{C}$. Then, the market equilibrium (ME) can be obtained by solving the optimization problem \eqref{EG_Program}. Also, the optimal allocation $\mathbf{x}^*$ and the associated dual variables $\mathbf{p}^*$ (corresponding to constraints \eqref{c2.1}–\eqref{c2.5}) together constitute a market equilibrium.
\begin{equation}
\begin{aligned}
& \underset{\mathbf{x}}{\text{maximize}} 
& & \sum_s B^s\log\left(\sum_{l}\sum_{c}u^s_{lc}(\mathbf{x}^s_{lc})\right) \\
& \text{subject to} 
& & u^s_{lc}(\mathbf{x}^s_{lc}) = \min_{r} \left\{ \frac{x^s_{rlc}}{\overline{d}^s_{rlc}} \right\}, \forall s, \forall l\in\mathcal{L}, \forall c\in\mathcal{C}, \\
& (\gamma_{rl}) 
& & \widehat{x}_{rl} - X_{rl} \leq 0, \forall r\in\mathcal{R}, \forall l\in\mathcal{L}, \\
& (\mu_l) 
& & \widehat{e}_l(\widehat{\mathbf{x}}_l) \leq E_l, \forall l\in\mathcal{L}, \\
& (\lambda) 
& & \sum_{l}\widehat{e}_l(\widehat{\mathbf{x}}_l) + \sum_{c}\widetilde{e}_c(\widetilde{\mathbf{x}}_c) \leq E_g, \\
& 
& & x^s_{rlc} \geq 0, \forall s, \forall r\in\mathcal{R}, \forall l\in\mathcal{L}, \forall c\in\mathcal{C}.
\end{aligned}
\label{EG_Program}
\end{equation}
\end{proposition}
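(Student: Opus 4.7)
The plan is to apply the Eisenberg--Gale convex-programming technique, extended to cover the coupled energy constraints, and to recover the equilibrium prices $\mathbf{p}^*$ from the KKT multipliers of \eqref{EG_Program}. I would first reformulate the program by introducing auxiliary variables $u^s_{lc}$ with linear inequalities $u^s_{lc} \leq x^s_{rlc}/\overline{d}^s_{rlc}$ for every $r$ (these are tight at any optimum, so the relaxation is exact), making the objective $\sum_s B^s \log(\sum_{l,c} u^s_{lc})$ concave. Combined with the convexity hypothesis on $\widehat{e}_l$ and $\widetilde{e}_c$ and a standard Slater point (any strictly interior allocation), strong duality holds and the KKT conditions are both necessary and sufficient for optimality.

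Next, I would form the Lagrangian with multipliers $\gamma_{rl}$, $\mu_l$, $\lambda$ and identify the equilibrium prices as
\[
p^*_{rlc} := \lambda\,\nabla_{\widetilde{x}_{rc}}\widetilde{e}_c(\widetilde{\mathbf{x}}^*_c),\; c\neq 0,\qquad p^*_{rl0} := \gamma_{rl} + (\lambda+\mu_l)\,\nabla_{\widehat{x}_{rl}}\widehat{e}_l(\widehat{\mathbf{x}}^*_l),
\]
so that the stationarity condition $\partial \mathcal{L}/\partial x^s_{rlc} \leq 0$ becomes $(B^s/U^{s*})\,\partial U^s/\partial x^s_{rlc} \leq p^*_{rlc}$, with equality whenever $x^{s*}_{rlc} > 0$. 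Complementary slackness on $(\gamma_{rl},\mu_l,\lambda)$ then directly yields \eqref{c2.3}--\eqref{c2.5}, so condition \textbf{C2} is satisfied by construction of the prices.

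The main technical step is to show that each SP's budget is exhausted at $(\mathbf{p}^*,\mathbf{x}^*)$. Because $U^s$ is positively homogeneous of degree one in $\mathbf{x}^s$ (each $u^s_{lc}$ scales linearly with $\mathbf{x}^s_{lc}$, and the sum inherits this property), Euler's identity gives $\sum_{r,l,c} x^{s*}_{rlc}\,\partial U^s/\partial x^s_{rlc} = U^{s*}$. Multiplying the EG stationarity equality by $x^{s*}_{rlc}$ and summing collapses the left-hand side to $B^s$, yielding $\sum_{r,l,c} p^*_{rlc} x^{s*}_{rlc} = B^s$. With budget exhaustion in hand, I would verify \textbf{C1} by writing the KKT system for the individual problem $Q_s$ at prices $\mathbf{p}^*$: it is a concave maximization with a single linear budget constraint, and its KKT conditions $\partial U^s/\partial x^s_{rlc} \leq \nu^s p^*_{rlc}$ (with equality on positive components) coincide with those derived from \eqref{EG_Program} upon setting $\nu^s := U^{s*}/B^s$; hence $\mathbf{x}^{s*}$ optimally solves $Q_s$ and \textbf{C1} holds.

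The main obstacle is precisely the budget-exhaustion step, because here the per-unit prices are not the Lagrange duals of simple linear capacity constraints (as in the textbook Fisher market) but combinations of capacity duals and convex energy-gradient terms scaled by $\lambda$ and $\mu_l$. The Euler-identity argument nonetheless goes through because these gradient terms enter the stationarity relation linearly in $\mathbf{x}^s$ and the utility is positively homogeneous of degree one; the nonlinearity of $\widehat{e}_l$ and $\widetilde{e}_c$ themselves never appears directly in the budget equation, so the classical Fisher--market collapse at the optimum is preserved.
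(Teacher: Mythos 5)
Your argument is correct in substance, but it takes a genuinely different (and far more explicit) route than the paper: the paper's entire proof is a one-line reduction, checking that the utilities \eqref{agent_uti1}--\eqref{agent_uti3} are concave and homogeneous of degree one and that the energy functions are convex and increasing, and then invoking Theorem~1 of the authors' prior work on Fisher markets with externalities. You instead reconstruct from scratch the Eisenberg--Gale-type argument that such a theorem encapsulates: the auxiliary-variable linearization of the Leontief utilities, Slater/strong duality, the identification of $\mathbf{p}^*$ from the KKT multipliers $(\gamma_{rl},\mu_l,\lambda)$ composed with the energy gradients, complementary slackness giving \eqref{c2.3}--\eqref{c2.5}, and the Euler-identity collapse that yields budget exhaustion and hence condition C1. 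What the paper's approach buys is brevity and a clean separation of concerns (the market-with-externalities machinery lives in the cited reference); what yours buys is self-containedness and a transparent explanation of \emph{why} the prices in \eqref{c2.1}--\eqref{c2.2} have exactly that gradient form. Two points you should tighten if you keep the full argument: (i) the Leontief utility is not differentiable, so Euler's identity and the stationarity relation must be phrased via the multipliers of the auxiliary constraints $u^s_{lc}\leq x^s_{rlc}/\overline{d}^s_{rlc}$ (or subgradients) rather than $\partial U^s/\partial x^s_{rlc}$ literally --- your auxiliary reformulation already supports this, but the text slides back into gradient notation; and (ii) you should note that when verifying C1 you are matching the buyer problem as stated in \eqref{bestdemand}, i.e., with the budget constraint only, not the version $Q_s$ that also carries the per-location caps $\widehat{U}^s_l$.
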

\begin{proof}
    As the utility function of each SP (agent) is concave homogeneous of degree one, and $\widehat{e}_l(\widehat{\mathbf{x}}_{l})$ and $\widetilde{e}_c(\widetilde{\mathbf{x}}_{c})$ are convex and increasing in $\widehat{\mathbf{x}}_{l}$ and $\widetilde{\mathbf{x}}_{c}$, the claim directly follows from \cite[Theorem 1]{YourLastName2025}.
\end{proof}
We have shown that ME can be found by solving the optimization program \eqref{EG_Program}.
This program's objective is equivalent to maximizing Nash social welfare.
As a result, the equilibrium allocation also maximizes the Nash welfare or ensures proportional fairness while distributing resources among the SPs.
\section{Numerical Simulations}\label{Numerical}
For our numerical experiments, we examine a scenario involving four competing SPs operating across two distinct locations. Each SP offers a unique type of application, distinguished by its primary resource demand: one is radio-intensive, another is computing-intensive, the third requires high memory (RAM), and the fourth has balanced resource needs.
\renewcommand{\arraystretch}{0.95}
\begin{table}[h]
\centering
\caption{The base demand vector of service classes}
\begin{tabular}{|c|c|c|c|}
\hline 
User Applications & VCPU  & RAM (GB) & NB (Mbps) \\ 
\hline 
BW-Intensive & 2-4 & 8-12 & 300-492 \\ 
\hline 
CPU-Intensive & 30-36 & 6-8 & 50-70 \\ 
\hline 
RAM-Intensive & 2-4  & 28-32 & 50-70 \\ 
\hline 
Balanced & 2-4 & 3.5-4 & 50-70 \\ 
\hline 
\end{tabular} 
\label{basedemandtable}
\end{table}
 The exact SPs' resource requirements are reported in the Table. Each location has wireless communication resources and an EC facility with CPUs and RAM. Both locations are also connected to two cloud servers that provide additional computing power, including CPUs and RAM. We neglect the energy cost of the RAM, against radio and CPU usage. For radio resource as it is available at location and is used to connect EC and cloud \( \widehat{x}_{\text{\tiny RAN},l}=\sum_s \sum_c x^s_{\text{\tiny RAN},lc}\) while \( \widehat{x}_{\text{\tiny CPU},l} = \sum_s x^s_{\text{\tiny CPU},l0} \) and \( \widehat{x}_{\text{\tiny RAM},l} = \sum_s x^s_{\text{\tiny RAM},l0} \) denote the total CPU and RAM demand, respectively, at the EC facility located at \( l \).
Similarly, let \( \widetilde{x}_{\text{\tiny CPU},c} = \sum_s \sum_l x^s_{\text{\tiny CPU},lc} \) and \( \widetilde{x}_{\text{\tiny CPU},c} = \sum_s \sum_l x^s_{\text{\tiny RAM},lc} \) denote the total demand for resource CPU and RAM at cloud facility \( c \). The total energy consumption due to  RAN and CPU at each location $l$ is given as  
\[
\widehat{e}_l(\widehat{\mathbf{x}}_l)=\widehat{e}_{\text{\tiny RAN},l}\left(\widehat{x}_{\text{\tiny RAN},l}\right)+ \widehat{e}_{\text{\tiny CPU},l}\left(\widehat{x}_{\text{\tiny CPU},l}\right)\;\forall l \in \mathcal{L}, \label{constraint1} 
\]
and the total energy consumption in the clouds is given as 
\[
\widetilde{e}_c(\widetilde{\mathbf{x}}_c)= \widetilde{e}_{\text{\tiny CPU},c}\left(\widetilde{x}_{\text{\tiny CPU},c}\right)\;\forall c \in \mathcal{C}, \label{constraint1} 
\]
 where we consider that the energy consumption for each RAN and CPU resource follows a power function: \( e_k(\overline{x}_k) = (\overline{x}_k)^{\beta_k} \), where \( \beta_k \geq 1 \). This formulation helps to analyze different trade-offs in energy consumption. Also, in case of CO$_2$ constraints, it is an abstract way of representing different energy mix at different locations. In this paper, we do not go into the details of the specific functions, but rather explore the effects of $\beta_k$ on the overall trade-off between utility and costs (e.g., energy and carbon footprint). We impose energy constraints at two levels:
\textit{(i)} overall energy consumption across all resources and
\textit{(ii)} local energy limits for radio resources and the EC facility at each location. We apply the proposed resource allocation method by solving equation \eqref{EG_Program}, and compare the resulting ME allocation with the social optimum (SO), which maximizes the total weighted service provided by SPs while meeting energy and resource capacity constraints.
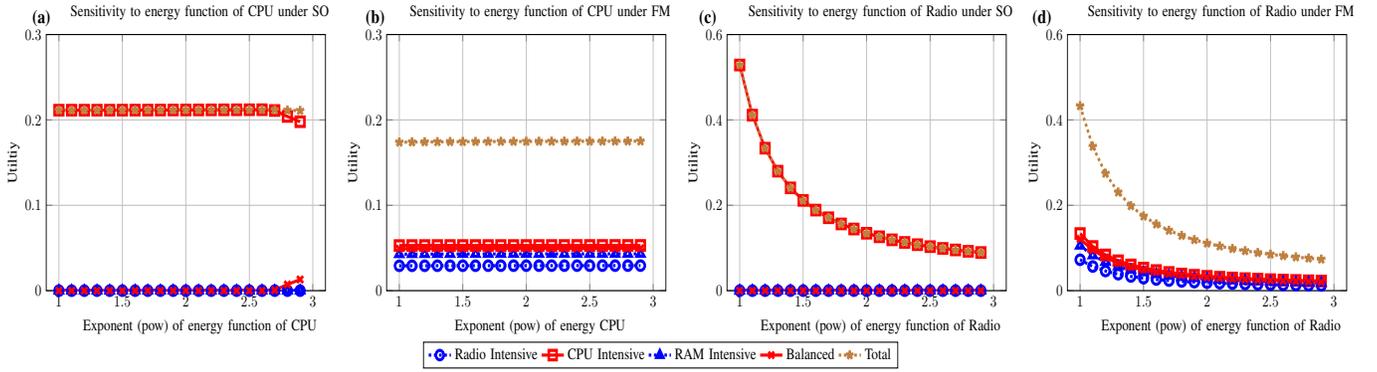
\begin{figure*}
    \resizebox{\textwidth}{4.9cm}{
   \begin{tikzpicture}
\node[anchor=south west] at (-2,5.75){\large\textbf{(a)}};
   \node[anchor=south west] at (7,5.75){\large\textbf{(b)}};
    \node[anchor=south west] at (16,5.75){\large\textbf{(c)}};
     \node[anchor=south west] at (25,5.75){\large\textbf{(d)}};
        \begin{axis}[
             name=ax1,
            title={Sensitivity to energy function of CPU under SO },
            xshift=-1.5cm,
            xmin=0.9, xmax=3.1,
            ymin=0, ymax=\finalMaxA,
            xscale=1.1, yscale=1,
            xlabel={Exponent (pow) of energy function of CPU},
            ylabel={Utiltiy},
            ylabel style={yshift=-10pt},
            grid=major,
            legend style={at={(2,-0.2)}, anchor=north,legend columns=-1},
        ]

           \addlegendentry{Radio Intensive}
             \addplot[dotted, line width=2pt,mark=o,mark options={solid},mark size=3,blue] table [x index=0, y index=1, col sep=comma] {Pictures/filtered_1_multi.csv};
               \addlegendentry{CPU Intensive}
             \addplot[line width=2pt,mark=square,mark size=3, red] table [x index=0, y index=2, col sep=comma] {Pictures/filtered_1_multi.csv};
             \addlegendentry{RAM Intensive}
             \addplot[dotted,line width=2pt,mark=triangle,mark options={solid},mark size=3, blue] table [x index=0, y index=3, col sep=comma] {Pictures/filtered_1_multi.csv};
            \addlegendentry{Balanced} 
             \addplot[line width=2pt,mark=x,mark size=3,red] table [x index=0, y index=4, col sep=comma] {Pictures/filtered_1_multi.csv};
             \addlegendentry{Total}
              \addplot [dotted,line width=2pt,mark=star,mark options={solid}, mark size=3,brown] table [x index=0, 
y expr=\thisrow{Us0} + \thisrow{Us1}+ \thisrow{Us2}+ \thisrow{Us3}, col sep=comma] {Pictures/filtered_1_multi.csv};
\end{axis}

\begin{axis}[
             name=ax2,
        at={(ax1.south east)},
         xmin=0.9, xmax=3.1,
          ymin=0, ymax=\finalMaxA,
        xscale=1.1, yscale=1,
        xshift=1.5cm,
            title={Sensitivity to energy function of CPU under FM },
            xlabel={Exponent (pow) of energy CPU},
            ylabel={Utility},
            ylabel style={yshift=-10pt},
            grid=major,
        ]

            \addplot[dotted, line width=2pt,mark=o,mark options={solid},mark size=3,blue] table [x index=0, y index=1, col sep=comma] {Pictures/filtered_2_multi.csv};
              
             \addplot[line width=2pt,mark=square,mark size=3, red] table [x index=0, y index=2, col sep=comma] {Pictures/filtered_2_multi.csv};
           
             \addplot[dotted,line width=2pt,mark=triangle,mark options={solid},mark size=3, blue] table [x index=0, y index=3, col sep=comma] {Pictures/filtered_2_multi.csv};
          
             \addplot[line width=2pt,mark=x,mark size=3,red] table [x index=0, y index=4, col sep=comma] {Pictures/filtered_2_multi.csv};
             \addplot[dotted,line width=2pt,mark=star,mark options={solid}, mark size=3,brown] table [x index=0, 
y expr=\thisrow{Us0} + \thisrow{Us1}+ \thisrow{Us2}+ \thisrow{Us3}, col sep=comma] {Pictures/filtered_2_multi.csv};
\end{axis}

\begin{axis}[
             name=ax3,
        at={(ax2.south east)},
         xmin=0.9, xmax=3.1,
          ymin=0, ymax=\finalMaxB,
       xscale=1.1, yscale=1,
        xshift=1.5cm,
            title={Sensitivity to energy function of Radio under SO},
            xlabel={Exponent (pow) of energy function of Radio},
            ylabel={Utility},
            ylabel style={yshift=-10pt},
            grid=major,
        ]

            \addplot[dotted, line width=2pt,mark=o,mark options={solid},mark size=3,blue] table [x index=0, y index=1, col sep=comma] {Pictures/filtered_3_multi.csv};
              
             \addplot[line width=2pt,mark=square,mark size=3, red] table [x index=0, y index=2, col sep=comma] {Pictures/filtered_3_multi.csv};
           
             \addplot[dotted,line width=2pt,mark=triangle,mark options={solid},mark size=3, blue] table [x index=0, y index=3, col sep=comma] {Pictures/filtered_3_multi.csv};
          
             \addplot[line width=2pt,mark=x,mark size=3,red] table [x index=0, y index=4, col sep=comma] {Pictures/filtered_3_multi.csv};

             \addplot[dotted,line width=2pt,mark=star,mark options={solid}, mark size=3,brown] table [x index=0, 
y expr=\thisrow{Us0} + \thisrow{Us1}+ \thisrow{Us2}+ \thisrow{Us3}, col sep=comma] {Pictures/filtered_3_multi.csv};
\end{axis}

\begin{axis}[
             name=ax4,
        at={(ax3.south east)},
         xmin=0.9, xmax=3.1,
          ymin=0, ymax=\finalMaxB,
       xscale=1.1, yscale=1,
        xshift=1.5cm,
            title={Sensitivity to energy function of Radio under FM},
            xlabel={Exponent (pow) of energy function of Radio },
            ylabel={Utility},
            ylabel style={yshift=-10pt},
            grid=major,
        ]

             \addplot[dotted, line width=2pt,mark=o,mark options={solid},mark size=3,blue] table [x index=0, y index=1, col sep=comma] {Pictures/filtered_4_multi.csv};
              
             \addplot[line width=2pt,mark=square,mark size=3, red] table [x index=0, y index=2, col sep=comma] {Pictures/filtered_4_multi.csv};
           
             \addplot[dotted,line width=2pt,mark=triangle,mark options={solid},mark size=3, blue] table [x index=0, y index=3, col sep=comma] {Pictures/filtered_4_multi.csv};
          
             \addplot[line width=2pt,mark=x,mark size=3,red] table [x index=0, y index=4, col sep=comma] {Pictures/filtered_4_multi.csv};

             \addplot[dotted,line width=2pt,mark=star, mark options={solid}, mark size=3,brown] table [x index=0, 
y expr=\thisrow{Us0} + \thisrow{Us1}+ \thisrow{Us2}+ \thisrow{Us3}, col sep=comma] {Pictures/filtered_4_multi.csv};
\end{axis}

\end{tikzpicture}

}

\caption{ Impact of different energy functions on agents' utility. (a) and (b): Sensitivity to the CPU energy function under Social Optimum (SO) and Fisher Market (FM)-based allocation, respectively. (c) and (d): Sensitivity to the Radio energy function under SO and FM-based allocation, respectively. }
    \label{fig:sensitivity}
\end{figure*}
\par In Figs. \ref{fig:sensitivity} we analyze the sensitivity of cost functions on SPs' utilities under both allocation schemes,  
and they are obtained by adjusting the power of the cost function $\beta_k$ from 1 to 3. \md{ From Fig. \ref{fig:sensitivity}(a)-(b), we observe that changing the power of the CPU energy cost has little effect on the overall utilities of the SPs. This is because only restrictions at locations level are active, while the global constraint is inactive. Since SPs use CPUs in the cloud, varying the CPU energy function affects utility only when the global constraint activates, around exponent 2.5–3 in Fig. 2(a). In contrast, utilities are more sensitive to changes in the radio energy function due to the active location level restrictions.} However, in Fig. \ref{fig:sensitivity}(a), which shows the SO scheme, the utilities are unevenly distributed—some SPs benefit significantly while others receive very little. In contrast, Fig. \ref{fig:sensitivity}(b) shows that the FM scheme leads to a fairer allocation among SPs. Despite the difference in fairness, the total utilities (efficiency) under both schemes remain similar. Figs.  \ref{fig:sensitivity}(c)-(d) show that the SPs' utilities decrease as the power of the radio energy function increases. We also observe that the FM scheme provides better fairness compared to the SO scheme as in previous case. 
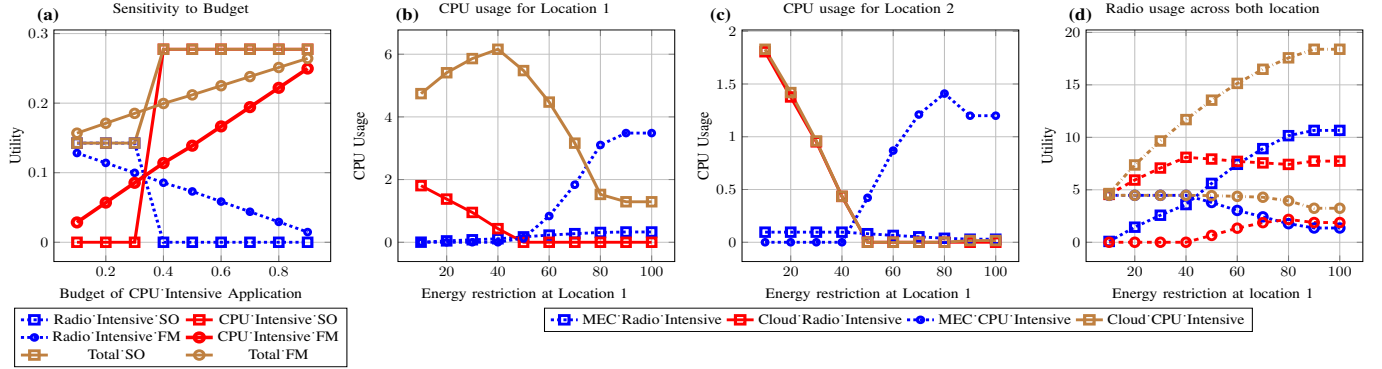
\begin{figure*}
    \resizebox{\textwidth}{4.9cm}{
   \begin{tikzpicture}
   \node[anchor=south west] at (1,5.75){\large\textbf{(a)}};
   \node[anchor=south west] at (9,5.75){\large\textbf{(b)}};
    \node[anchor=south west] at (16,5.75){\large\textbf{(c)}};
     \node[anchor=south west] at (24,5.75){\large\textbf{(d)}};
\begin{axis}[xshift=1.5cm,
            title={Sensitivity to Budget},
            name=ax1,
            xscale=0.9, yscale=1,
            xlabel={Budget of CPU_Intensive Application},
            ylabel={Utility},
            ylabel style={yshift=-10pt},
            grid=major,
            legend style={at={(0.5,-0.2)}, anchor=north,legend columns=2},
        ]
           \addlegendentry{Radio_Intensive_SO}
             \addplot [dotted,line width=2pt,mark=square,mark options={solid},mark size=3,blue] table [x index=0, y index=1, col sep=comma] {Pictures/filtered_1_twoSP.csv};
              \addlegendentry{CPU_Intensive_SO}
             \addplot [line width=2pt,mark=square, mark size=3,red] table [x index=0, y index=2, col sep=comma] {Pictures/filtered_1_twoSP.csv};
             \addlegendentry{Radio_Intensive_FM}
             \addplot[dotted,line width=2pt,mark=o,
    mark options={solid}, blue] table [x index=0, y index=1, col sep=comma] {Pictures/filtered_2_twoSP.csv};
    \addlegendentry{CPU_Intensive_FM} 
             \addplot[line width=2.5pt,mark=o,mark size=3, red] table [x index=0, y index=2, col sep=comma] {Pictures/filtered_2_twoSP.csv};
\addlegendentry{Total_SO}
     \addplot[line width=2pt,mark=square,mark size=3, brown] table [x index=0,  y expr=\thisrow{Us0} + \thisrow{Us1}, col sep=comma] {Pictures/filtered_1_twoSP.csv};
\addlegendentry{Total_FM}
\addplot[line width=2pt,mark=o,mark size=3, brown] table [x index=0,  y expr=\thisrow{Us0} + \thisrow{Us1}, col sep=comma] {Pictures/filtered_2_twoSP.csv};

        \end{axis}
\begin{axis}[
         at={(ax1.south east)}, xshift=1.5cm,
         	name=ax2,
            title={CPU usage for Location 1},
            xscale=0.9, yscale=1,
            xlabel={Energy restriction at Location 1},
            ylabel={CPU Usage},
            ylabel style={yshift=-10pt},
            grid=major,
            legend style={at={(2,-0.2)}, anchor=north,legend columns=4},
        ]
            \addlegendentry{MEC_Radio_Intensive}
             \addplot [dotted,line width=2pt,mark=square,mark options={solid},mark size=3,blue] table [x index=0, y index=1, col sep=comma] {Pictures/filtered_4_twoSP.csv};
              \addlegendentry{Cloud_Radio_Intensive}
             \addplot [line width=2pt,mark=square, mark size=3,red] table [x index=0, 
y expr=\thisrow{S0L0K1R1} + \thisrow{S0L0K2R1}, col sep=comma] {Pictures/filtered_4_twoSP.csv};
 				\addlegendentry{MEC_CPU_Intensive}             
             \addplot[dotted,line width=2pt,mark=o,
    mark options={solid}, blue] table [x index=0, y index=3, col sep=comma] {Pictures/filtered_4_twoSP.csv};
            \addlegendentry{Cloud_CPU_Intensive} 
\addplot[line width=2pt,mark=square,mark size=3, brown] table [x index=0,  y expr=\thisrow{S1L0K1R1} + \thisrow{S1L0K2R1}, col sep=comma] {Pictures/filtered_4_twoSP.csv};

 \end{axis}
\begin{axis}[
         at={(ax2.south east)}, xshift=1.5cm,
         	name=ax3,
            xscale=0.9, yscale=1,
            title={CPU usage for Location 2},
            xlabel={Energy restriction at Location 1},
            ylabel={CPU Usage},
            ylabel style={yshift=-10pt},
            grid=major,
        ]
             \addplot [dotted,line width=2pt,mark=square,mark options={solid},mark size=3,blue] table [x index=0, y index=2, col sep=comma] {Pictures/filtered_4_twoSP.csv};
             \addplot [line width=2pt,mark=square, mark size=3,red] table [x index=0, 
y expr=\thisrow{S0L1K1R1} + \thisrow{S0L1K2R1}, col sep=comma] {Pictures/filtered_4_twoSP.csv};
             \addplot[dotted,line width=2pt,mark=o,
    mark options={solid}, blue] table [x index=0, y index=4, col sep=comma] {Pictures/filtered_4_twoSP.csv};
            
\addplot[line width=2pt,mark=square,mark size=3, brown] table [x index=0,  y expr=\thisrow{S1L1K1R1} + \thisrow{S1L1K2R1}, col sep=comma] {Pictures/filtered_4_twoSP.csv};

 \end{axis} 
 
\begin{axis}[
         at={(ax3.south east)},xshift=1.5cm,
         	name=ax4,
            title={Radio usage across both location},
            xscale=0.9, yscale=1,
            xlabel={Energy restriction at location 1},
            ylabel={Utility},
            ylabel style={yshift=-10pt},
            grid=major,
            legend style={at={(6,-0.2)}, anchor=north,legend columns=1},
        ]
             \addplot [dotted,line width=2pt,mark=square,mark options={solid},mark size=3,blue] table [x index=0, y index=3, col sep=comma] {Pictures/filtered_5_twoSP.csv};
             
             \addplot [dashed,line width=2pt,mark=square,mark options={solid},mark size=3,red] table [x index=0, y index=4, col sep=comma] {Pictures/filtered_5_twoSP.csv};

\addplot [dotted,line width=2pt,mark=o,mark options={solid},mark size=3,blue] table [x index=0, y index=5, col sep=comma] {Pictures/filtered_5_twoSP.csv};

\addplot [dashed,line width=2pt,mark=o,mark options={solid},mark size=3,red] table [x index=0, y index=6, col sep=comma] {Pictures/filtered_5_twoSP.csv};

\addplot [dash dot,line width=2pt,mark=square,mark options={solid},mark size=3,brown] table [x index=0, y index=1, col sep=comma] {Pictures/filtered_5_twoSP.csv};

\addplot [dash dot,line width=2pt,mark=o,mark options={solid},mark size=3,brown] table [x index=0, y index=2, col sep=comma] {Pictures/filtered_5_twoSP.csv};

 \end{axis} 
 
\end{tikzpicture}}

\caption{Variation in the usage of: (a) radio, (b) CPU, and the utility derived from both MEC and cloud facilities vs. local energy consumption constraints. }
    \label{fig:restriction}
\end{figure*}
To simplify the analysis, in the remaining simulations, shown in Figs. \ref{fig:restriction} and  \ref{fig:multiple_images}, we focus on two SPs: the first one offering a radio-intensive service, and the other a CPU-intensive service. In Fig. \ref{fig:restriction}(a), we study how resource allocation is influenced by the SPs’ budgets. Specifically, we vary the budget of the radio-intensive SP from 0.1 to 0.9 and observe the resulting utilities for both SPs under two allocation schemes. The results show that under the SO scheme, the SP with a lower budget receives significantly less utility. In contrast, the proposed ME-based approach is less sensitive to budget variations, and strikes a more balanced trade-off between cost, utility, and fairness. \par Fig. \ref{fig:restriction} (b), (c), and (d) show how local energy constraints affect resource allocation and SPs utilities. For simulations, we vary the energy limit at location 1 from 10 to 90 units, while keeping it fixed at other locations. Fig. \ref{fig:restriction}(b) shows how CPU usage at EC location 1 and the cloud facilities changes under different restriction levels. For the CPU-Intensive case, as restrictions are gradually relaxed, cloud usage initially increases since more radio resources become available for offloading. However, with further relaxation, EC usage starts to increase, reducing dependency on the cloud. In contrast, for the Radio-Intensive case, the cloud is initially used more intensively. However, as restrictions ease, cloud usage decreases slightly while EC usage increases, though not as significantly as in the CPU-Intensive case. 
\begin{figure*}[ht] \centering \begin{subfigure}{0.42\textwidth} \centering \includegraphics[width=\linewidth]{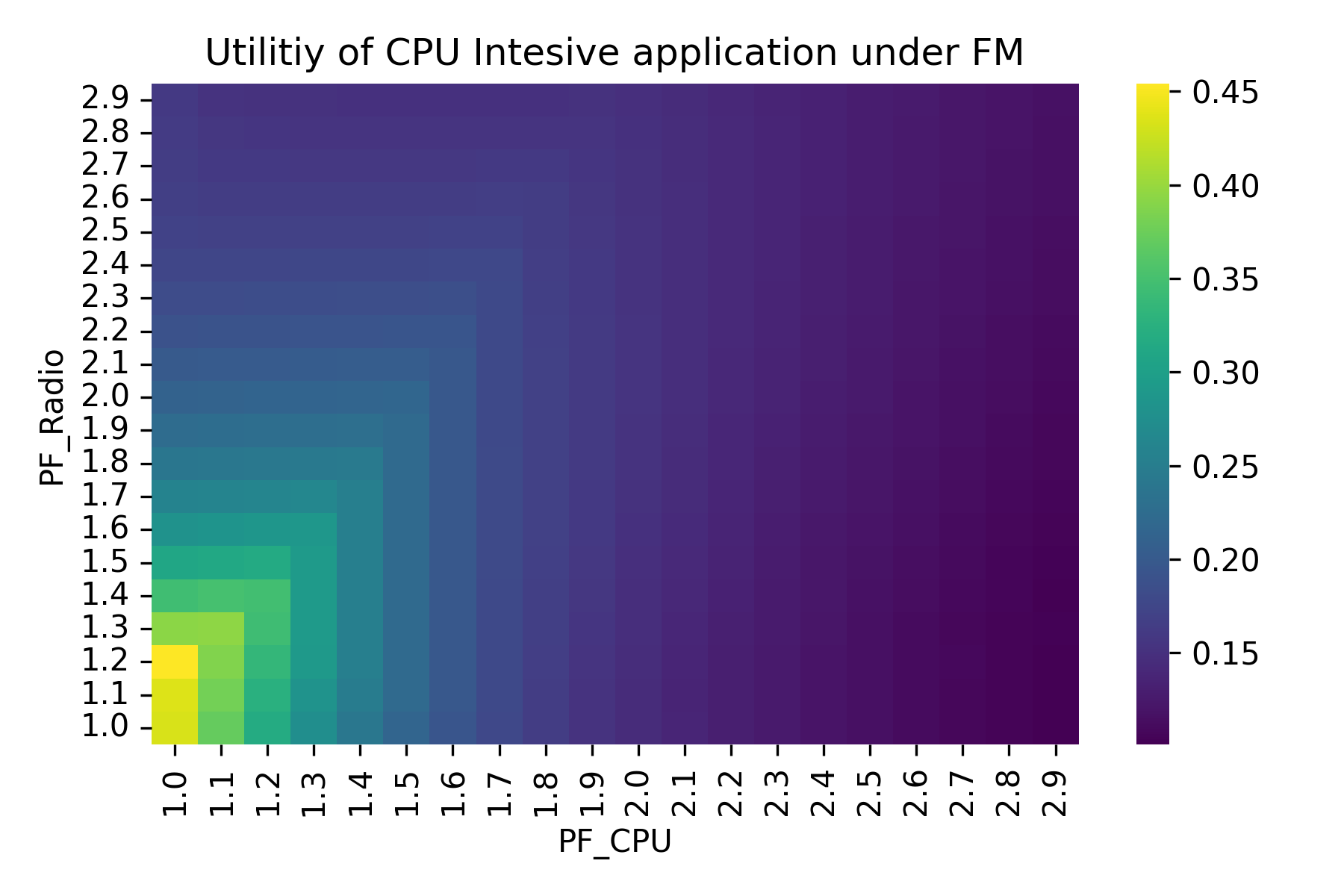} \captionsetup{skip=0pt}\caption{Fisher Market, CPU-intensive} \label{fig:fm_cpu} \end{subfigure} \hspace{-0.5cm}
\begin{subfigure}{0.42\textwidth} \centering \includegraphics[width=\linewidth]{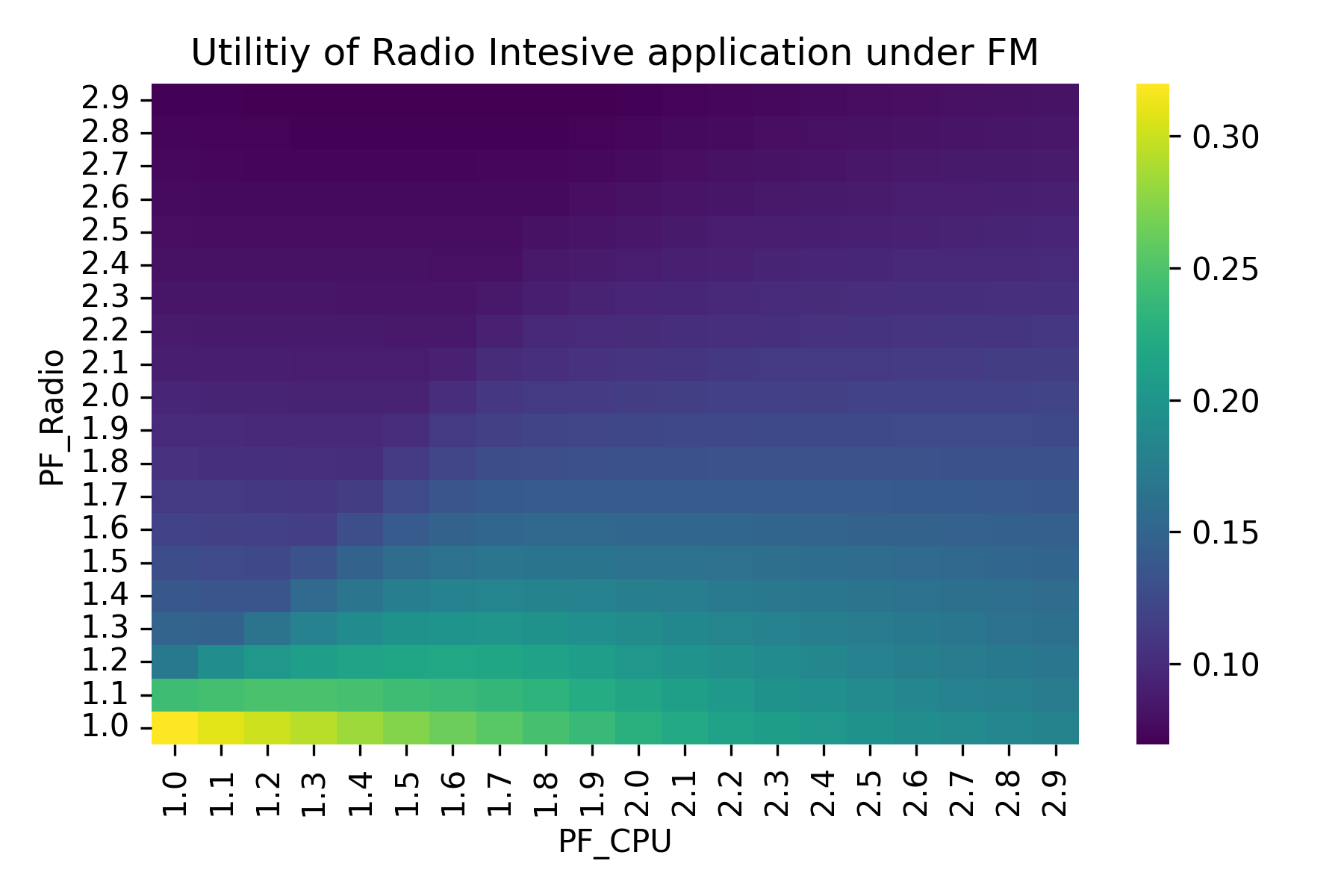} 
\captionsetup{skip=0pt}
\caption{Fisher Market, Radio-intensive} \label{fig:fm_radio} \end{subfigure}  \begin{subfigure}{0.42\textwidth} \centering \includegraphics[width=\linewidth]{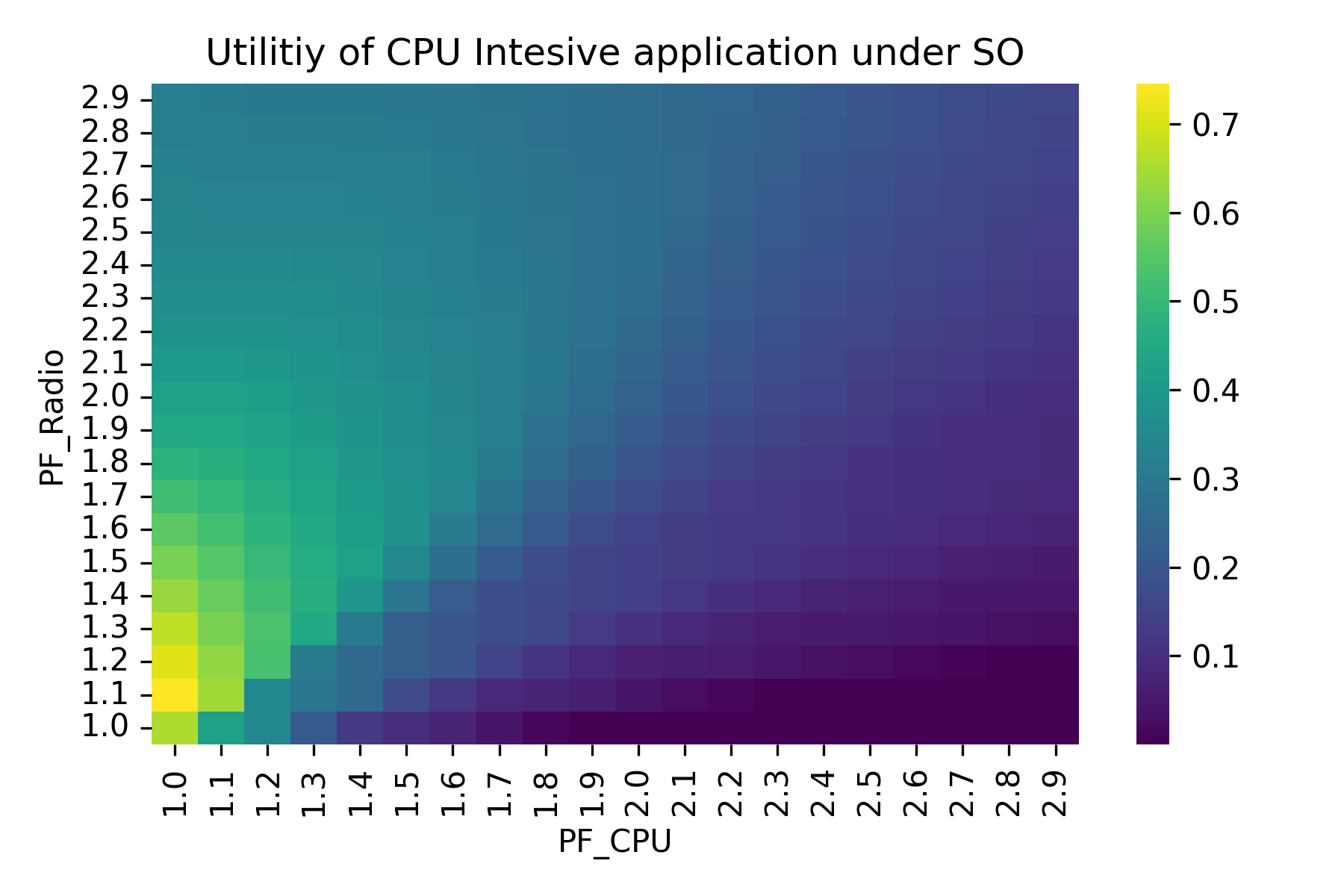}
\captionsetup{skip=0pt}
\caption{Social Optimum, CPU-intensive} \label{fig:so_cpu} \end{subfigure} \hspace{-0.5cm} \begin{subfigure}{0.42\textwidth} \centering \includegraphics[width=\linewidth]{Pictures/SO_heatmmap_CPU.png} 
\captionsetup{skip=0pt}
\caption{Social Optimum, Radio-intensive} \label{fig:so_radio} \end{subfigure} \caption{Utility variation of CPU-intensive (a, c) and radio-intensive (b, d) service providers as CPU (x-axis) and radio (y-axis) energy cost exponents vary, under Fisher Market and Social Optimum schemes.} \label{fig:multiple_images} \end{figure*}
Fig. \ref{fig:restriction}(b) shows that even though the restrictions are applied only at location 1, they also impact CPU usage at location 2. This happens because both locations share two common cloud facilities and are subject to the same global energy constraints. In particular, we observe that as energy constraints are relaxed, both SPs reduce their cloud usage. To compensate, the cloud-intensive SP increases its use of EC resources. The trends in Fig. \ref{fig:restriction}(b) and (c) are further supported by Fig. \ref{fig:restriction}(d), which shows how radio resource usage changes with varying restrictions. As restrictions at location 1 are relaxed, CPU-intensive SPs increasingly use radio to offload to the cloud, and similarly, radio-intensive SPs offload more to the EC. In contrast, the use of EC's CPU by CPU-intensive SPs decreases. Fig. 4 shows how the utility of SPs changes with the power $\beta$ in the energy consumption function, under both SO and FM allocation schemes. The heatmaps reveal that the FM-based scheme is less sensitive to changes in the cost function, resulting in a more balanced utility distribution between SPs. In contrast, the SO scheme is more sensitive, often allocating most of the resources to one SP, while giving almost nothing to the other.
\section{Conclusions and Future work}\label{sec:conclusions}
We proposed a resource allocation framework that takes into account heterogeneous resources, including radio, computing and energy (or, costs), with the latter varying spatially across different locations, e.g., due to different restrictions or energy mix. These heterogeneous resource and the interactions between SPs and InPs generates a complex multi-dimensional trade-off, where utilities and costs are to be balanced to promote fairness under different constraints. In this work, we intentionally consider generic functions linking resource usage and energy (or, carbon footprint), in order to explore the trade-offs in a technology-agnostic manner. Future works include specifying these functions under assumptions on exploited technologies and energy mix among others, to study the proposed trade-offs in real world applications and scenarios. 
\bibliographystyle{IEEEtran}
\bibliography{biblio_full}

\begin{thebibliography}{10}
\providecommand{\url}[1]{#1}
\csname url@samestyle\endcsname
\providecommand{\newblock}{\relax}
\providecommand{\bibinfo}[2]{#2}
\providecommand{\BIBentrySTDinterwordspacing}{\spaceskip=0pt\relax}
\providecommand{\BIBentryALTinterwordstretchfactor}{4}
\providecommand{\BIBentryALTinterwordspacing}{\spaceskip=\fontdimen2\font plus
\BIBentryALTinterwordstretchfactor\fontdimen3\font minus \fontdimen4\font\relax}
\providecommand{\BIBforeignlanguage}[2]{{%
\expandafter\ifx\csname l@#1\endcsname\relax
\typeout{** WARNING: IEEEtran.bst: No hyphenation pattern has been}%
\typeout{** loaded for the language `#1'. Using the pattern for}%
\typeout{** the default language instead.}%
\else
\language=\csname l@#1\endcsname
\fi
#2}}
\providecommand{\BIBdecl}{\relax}
\BIBdecl

\bibitem{Dilo23}
P.~Di~Lorenzo, M.~Merluzzi, F.~Binucci, C.~Battiloro, P.~Banelli, E.~C. Strinati, and S.~Barbarossa, ``Goal-oriented communications for the iot: System design and adaptive resource optimization,'' \emph{IEEE Internet of Things Magazine}, vol.~6, no.~4, pp. 26--32, 2023.

\bibitem{roughgarden2010algorithmic}
T.~Roughgarden, ``Algorithmic game theory,'' \emph{Communications of the ACM}, vol.~53, no.~7, pp. 78--86, 2010.

\bibitem{kamran2024green}
R.~Kamran, S.~Kiran, P.~Jha, A.~Karandikar, and P.~Chaporkar, ``Green {6G}: Energy awareness in design,'' in \emph{2024 16th International Conference on COMmunication Systems \& NETworkS (COMSNETS)}.\hskip 1em plus 0.5em minus 0.4em\relax IEEE, 2024, pp. 1122--1125.

\bibitem{zhong2024toward}
Y.~Zhong and X.~Ge, ``Toward {6G} carbon-neutral cellular networks,'' \emph{IEEE Network}, vol.~38, no.~5, pp. 174--181, 2024.

\bibitem{bolla20236g}
R.~Bolla, R.~Bruschi, C.~Lombardo, and B.~Siccardi, ``{6G} enablers for zero-carbon network slices and vertical edge services,'' \emph{IEEE Networking Letters}, vol.~5, no.~3, pp. 173--176, 2023.

\bibitem{boiardi2012joint}
S.~Boiardi, A.~Capone, and B.~Sans{\'o}, ``Joint design and management of energy-aware mesh networks,'' \emph{Ad Hoc Networks}, vol.~10, no.~7, pp. 1482--1496, 2012.

\bibitem{zilberman2023toward}
N.~Zilberman, E.~M. Schooler, U.~Cummings, R.~Manohar, D.~Nafus, R.~Soul{\'e}, and R.~Taylor, ``Toward carbon-aware networking,'' \emph{ACM SIGENERGY Energy Informatics Review}, vol.~3, no.~3, pp. 15--20, 2023.

\bibitem{buyya2008market}
R.~Buyya, C.~S. Yeo, and S.~Venugopal, ``Market-oriented cloud computing: Vision, hype, and reality for delivering it services as computing utilities,'' in \emph{2008 10th IEEE international conference on high performance computing and communications}.\hskip 1em plus 0.5em minus 0.4em\relax Ieee, 2008, pp. 5--13.

\bibitem{nguyen2019market}
D.~T. Nguyen, L.~B. Le, and V.~K. Bhargava, ``A market-based framework for multi-resource allocation in fog computing,'' \emph{IEEE/ACM Transactions on Networking}, vol.~27, no.~3, pp. 1151--1164, 2019.

\bibitem{modina2022multi}
N.~Modina, M.~Datar, R.~El-Azouzi, and F.~De~Pellegrini, ``Multi resource allocation for network slices with multi-level fairness,'' in \emph{ICC 2022-IEEE International Conference on Communications}.\hskip 1em plus 0.5em minus 0.4em\relax IEEE, 2022, pp. 4872--4877.

\bibitem{datar2023fisher}
M.~Datar, N.~Modina, R.~El-Azouzi, and E.~Altman, ``Fisher market model based resource allocation for {5G} network slicing,'' \emph{arXiv preprint arXiv:2307.16585}, 2023.

\bibitem{moro2021joint}
E.~Moro and I.~Filippini, ``Joint management of compute and radio resources in mobile edge computing: A market equilibrium approach,'' \emph{IEEE Transactions on Mobile Computing}, vol.~22, no.~2, pp. 983--995, 2021.

\bibitem{YourLastName2025}
\BIBentryALTinterwordspacing
M.~Datar and M.~Merluzzi, ``Balancing costs and utilities in future networks via market equilibrium with externalities,'' 2025. [Online]. Available: \url{https://hal.science/hal-05031379}
\BIBentrySTDinterwordspacing

\bibitem{baumol1972taxation}
W.~J. Baumol, ``On taxation and the control of externalities,'' \emph{The American Economic Review}, vol.~62, no.~3, pp. 307--322, 1972.

\bibitem{arrow1954existence}
K.~J. Arrow and G.~Debreu, ``Existence of an equilibrium for a competitive economy,'' \emph{Econometrica: Journal of the Econometric Society}, pp. 265--290, 1954.

\end{thebibliography}
\end{document}